\documentclass[10pt,journal,final,finalsubmission,twocolumn]{IEEEtran}
\makeatletter

\let\proof\@undefined
\let\endproof\@undefined
\makeatother

\usepackage{amssymb}
\usepackage{amsmath}
\usepackage{amsthm}
\usepackage{amsfonts}
\usepackage{algorithm}
\usepackage{algorithmic}

\usepackage[pdftex]{color}

%

\theoremstyle{plain}

\newtheorem{proposition}{Proposition}

\numberwithin{equation}{section}

{\begin{list}{}%
         {\setlength{\leftmargin}{#1}}%
         \item[]%
}
{\end{list}}

\setcounter{MaxMatrixCols}{50}




\usepackage{graphicx}  

\def\F{\mathcal{F}}
\def\G{\mathcal{G}}

\definecolor{orange}{rgb}{1,0.5,0}


\hyphenation{op-tical net-works semi-conduc-tor}

\begin{document}
%
\title{Nearness to Local Subspace Algorithm for Subspace and Motion Segmentation}
%
%
\author{Akram~Aldroubi
        and~Ali~Sekmen,~\IEEEmembership{Member,~IEEE}
\thanks{A. Akram and A. Sekmen are with the Department
Mathematics,Vanderbilt University, Nashville,
TN, 37212 USA e-mail: akram.aldroubi@vanderbilt.edu}
\thanks{A. Sekmen is also with the Department of Computer Science at Tennessee State 
Univesity, Nashville, TN 37209 USA email:asekmen@tnstate.edu }
\thanks{ The research of A.~Aldroubi is supported in part by NSF Grant DMS-0807464.}}
\maketitle

\begin{abstract}
There is a growing interest in computer science, engineering, and mathematics for modeling signals in terms of union of subspaces and manifolds. Subspace segmentation and clustering of high dimensional data drawn from a union of subspaces are especially important with many practical applications in computer vision, image and signal processing, communications, and information theory. This paper presents a clustering algorithm for high dimensional data that comes from a union of lower dimensional subspaces of equal and known dimensions. Such cases occur in many data clustering problems, such as motion segmentation and face recognition. The algorithm is reliable in the presence of noise, and applied to the Hopkins 155 Dataset, it generates the best results to  date for motion segmentation.  The two motion, three motion, and overall segmentaion rates for the video sequences are 99.43$\%$,  98.69$\%$, and  99.24$\%$, respectively.
\end{abstract}

\begin{keywords}
Subspace segmentation, motion segmentation, data clustering.
\end{keywords}

\section{Introduction}
\label{section:introduction}

The problem of subspace clustering is to find a nonlinear model of the form $\mathcal{U}=\bigcup_{i\in I}S_i$ where $\left\{S_i\right\}_{i\in I}$ is a set of  subspaces that  is nearest to a set of data $\textbf{W}=\left\{w_1,...,w_N\right\} \in \mathbb{R}^d$.  The model can then be used to classify the data $\textbf{W}$ into classes called clusters.

In many engineering and mathematics applications, data lives in a union of low dimensional subspaces \cite{Lu08,Kanatani03,Akram09,Vidalbook}.  For instance, consider a moving affine camera that captures $F$ frames of a scene that contains multiple moving objects. Let $p$ be a point of one of these objects and let $x_i(p), y_i(p)$ be the coordinates of $p$ in frame $i$. Define the   {\em trajectory vector} of $p$ as the vector $w(p)=(x_1(p),y_1(p),x_2(p),y_2(p),\dots,x_N(p),y_N(p))^t$ in $ \mathbb R^{2F}$.  It can be shown that the trajectory vectors of all points of an object in a video belong to a vector subspace in $\mathbb R^{2F}$ of dimension no larger than $4$ \cite{Kanatani01,Akram10}. Thus, trajectory vectors in videos can be modeled by a union $\mathcal{M}=\cup_{i\in I}V_i$ of $l$ subspaces where $l$ is the number of moving objects (background is itself a motion). It can also be shown that human facial motion and other non-rigid motions can be approximated by linear subspaces \cite{Bregler00,Brand01}. Another clustering problem that can be modeled as union of subspaces is recognition of faces. Specifically, the set of all two dimensional images of a given face $i$, obtained under different illuminations and facial positions, can be modeled as a set of vectors belonging to a low dimensional subspace $S_i$ living in a higher dimensional space $\mathbb{R}^d$ \cite{Basri01,Ho03,Vidalbook}. A set of such images from different faces is then a union $\mathcal{U}=\bigcup_{i\in I}S_i$. Similar nonlinear models arise in sampling theory where $\mathbb{R}^d$ is replaced by an infinite dimensional Hilbert space $\mathcal{H}$, e.g., $L^2(\mathbb{R}^d)$ \cite{AT11,Akram08,Lu08,Maravic05}.

\subsection{Subspace Segmentation Problem}
\label {SSP}
The goal of subspace clustering is to identify all of the subspaces that a set of data $\textbf{W}=\left\{w_1,...,w_N\right\} \in \mathbb{R}^d$ is drawn from and assign each data point $w_i$ to the subspace it belongs to. The number of subspaces, their dimensions, and a basis for each subspace are to be determined.  The subspace clustering or segmentation problem can be stated as follows:
\begin{changemargin}{1cm}{1cm} 
Let $\mathcal{U}=\bigcup_{i=1}^{M}S_i$ where $\left\{S_i \subset \mathcal{H}\right\}_{i=1}^{M}$ is a set of  subspaces of a  Hilbert space $\mathcal{H}$. Let $\textbf{W}=\left\{w_j \in \mathcal{H}\right\}_{j=1}^{N}$ be a set of data points drawn from $\mathcal{U}$. Then,
\begin{changemargin}{0.1cm}{0.1cm} 
\begin{enumerate}
	\item determine the number of subspaces $M$,
	\item determine the set of dimensions $\left\{d_i\right\}_{i=1}^{M}$,
	\item find an orthonormal basis for each subspace $S_i$,
	\item collect the data points belonging to the same subspace into the same cluster. \\
\end{enumerate}
\end{changemargin}
\end{changemargin} 
Note that often the data may be corrupted by noise, may have outliers or the data may not be complete, e.g., there may be missing data points.  In some subspace clustering problems, the number $M$ of subspaces  or the dimensions of the subspaces $\{d_i\}_{i=1}^{M}$ are known. A number of  approaches have been devised to solve the problem above or some of its special cases.

\subsubsection{Sparsity Methods}
 Elhamifar \textit{et al.} developed an algorithm for linear and affine subspace clustering using sparse representation of vectors \cite{Elhamifar09,Elhamifar10}. This method combined with a spectral clustering, gives good results for motion segmentation and it is more general than  Eldar's work in compressed sensing \cite{Eldar09}.  Another method, related to compressed sensing  by Liu \textit{et al.} \cite{Liu10, Liu10_2} finds the lowest rank representation of the data matrix. The lowest rank representation is then  used to define the similarity of an undirected graph, which is then followed by spectral clustering. Favaro \textit{et al.} in \cite{Favaro11} extends \cite{Elhamifar09,Elhamifar10,Liu10,Liu10_2}.

\subsubsection{Algebraic Methods}
Algebraic methods have also been used for solving the subspace clustering problem. The 
Generalized Principle Component Analysis (GPCA) is one such method \cite{Vidalbook,Vidal05,Vidal07}, and it can distinguish subspaces of different dimensions. Since it is algebraic, it is computationally inexpensive, however, its complexity increases exponentially as the number of subspaces and their dimensions increase. It is also  sensitive to noise and outliers.  The Robust Algebraic Segmentation is a more specialized algebraic method developed by Rao \textit{et al.} \cite{Rao10} to partition image correspondences to the motions in a 3-D dynamic scene (that contains 3-D rigid body and 2-D planar structures) under perspective camera projection. 

\subsubsection{Iterative and Statistical Methods}
Iterative methods have also been employed for the subspace clustering problem. For example, the  nonlinear least squares \cite{Akram08,Akram09} and K-subspaces \cite{Tseng00} start with an initial estimation of subspaces (or estimation of the bases of the subspaces). Then, a cost function reflecting the ``distance'' of a point  to each subspace is computed and the point is assigned to its closest subspace. After that, each cluster of data is used to reestimate each subspace. The procedure is repeated until the segmentation of data points does not change. These methods, however, are sensitive to the initialization and require  a good initial partition for convergence to a global minimum.

The statistical methods such as Multi Stage Learning (MSL) \cite{Kanatani03,Gruber04} are typically based on Expectation Maximization (EM) \cite{Candillier05}.  The union of subspaces is modeled by a mixture of probability distributions. For example,  each subspace is modeled by a Gaussian distribution. The model parameters are then estimated using \textit{Maximum Likelihood Estimation}. This is done by using a two-step process  that optimizes the \textit{log-likelihood} of the model which depends on some hidden (latent) variables. In \textit{E-Step} (Expectation), the expectation of the \textit{log-likelihood} is computed using the current estimate of the latent variables. In \textit{M-Step} (Maximization), the values of the latent variables are updated by maximizing the expectation of the \textit{log-likelihood}. As in the case of the iterative methods, statistical methods highly depends on initialization of model parameters or segmentation and they  assume that the number of subspaces as well as their dimensions are known.

The Random Sample Consensus (RANSAC) \cite{Bolles81}, which has been applied to numerous computer vision problems, is successful in dealing with noise and outliers. But it is a specialized algorithm and assumes that the  subspaces have the same dimension and that this dimension is known.

\subsubsection{Spectral Clustering Methods}
\label{SpectralMethods}
Spectral clustering \cite{Luxburg07} is often used in conjunction with other methods as the final step in clustering.  Some of the latest subspace clustering algorithms (such as \cite{Elhamifar09,Elhamifar10,Lerman09_3}) aim at defining an appropriate similarity matrix  between data points which then can be used for further processing using the spectral clustering method. An application of spectral clustering to motion segmentation can be found in \cite{Schnorr09}. Spectral curvature clustering \cite{Lerman09,Lerman09_2} is a variant of spectral clustering. \cite{Zelnik04} provides a spectral clustering algorithm that aims at reducing the computational complexity.  The motion segmentation algorithm developed by Yan and Pollefeys \cite{Yan06} first estimates a local linear manifold for each trajectory data and then computes an affinity matrix based on the principle subspace angles between each pair of local linear manifolds. The algorithm then uses spectral clustering for segmenting the trajectories of independent, articulated, rigid, and non-rigid body motions. \cite{Vidal10} gives a detailed treatment of various related algorithms.


\subsection{Motion Segmentation Problem}
The appendix gives a detailed treatment of motion segmentation as a special case of the subspace segmentation problem. First, a data matrix $W_{2F\times N}$ is constructed using $N$ feature points that are tracked across $F$ frames. Then, each column of $W$ (i.e., the trajectory vector of a feature point) is treated as a data point and it is shown that all of the data points that correspond to the same moving object lie in an at most 4-dimensional subspace of $\mathbb{R}^{2F}$. 

\subsection{Paper  Contributions}
\begin{enumerate}
\item This paper presents a clustering algorithm for high dimensional data that are drawn from a union of low dimensional subspaces of equal and known dimensions. The algorithm is applicable to the motion segmentation problem and uses some fundamental linear algebra concepts. Some of our ideas are similar to those of Yan and Pollefeys described above  in Section \ref{SpectralMethods}. However, our algorithm differs from theirs fundamentally as described below:
\begin{itemize}
\item Yan and Pollefeys' method estimate a subspace $S_i$ for each point $x_i$, and then computes the principle angles between those subspaces as an affinity measure. In our work, we also estimate  a subspace for each point, however, these local subspaces are used differently. They are used to compute the distance between each point $x_j$ to the local subspace  $S_i$ for the data point $x_i$.
\item In their method, an exponential function for affinity of two points $x_i$ and $x_j$ is used, and this exponential function depends on the principle angles between the subspaces $S_i$ and $S_j$ that are associated with $x_i$ and $x_j$, respectively. In our case, the affinity measure is different. We first find the distance between $x_j$ and $S_i$ and then apply  a threshold, computed from the data,  to obtain a binary similarity matrix for all data points.
\item The method of Yan and  Pollefeys uses  spectral clustering on the normalized graph Laplacian matrix of the similarity matrix they propose. However, our approach does not use the spectral clustering on the normalized graph Laplacian of our similarity matrix. Instead, our constructed binary similarity matrix  converts  our original data clustering problem to a simpler clustering of data from 1-dimensional subspaces which can be solved by any traditional data clustering algorithm.

\end{itemize}
\item Our algorithm is reliable in the presence of noise, and applied to the Hopkins 155 Dataset, it generates the best results to  date for motion segmentation.  The two motion, three motion, and overall segmentation rates for the video sequences are 99.43$\%$,  98.69$\%$, and  99.24$\%$, respectively.
\item Many of the subspace segmentation algorithms use SVD to represent the data matrix $W$ as $W=U\Sigma V^t$ and then replace $W$ with the first $r$ rows of $V^t$, where $r$ is the effective rank of $W$. This paper provides a formal justification for this in Proposition \ref{proposition1}.
\end{enumerate}

\subsection{Paper Organization}
The organization of the paper is as follows: Section \ref{preliminaries} gives some preliminaries. In Section \ref{S3}, we devise an algorithm for the subspace segmentation problem in the special case where the  subspaces have equal and known dimensions. In Section \ref{experimentalresults}, we apply our algorithm to the motion segmentation problem,  test it on the Hopkins 155 Datasets, explain the experimental procedure, and present the experimental results.

\section{Preliminaries}
\label{preliminaries}
In this section, we present  Proposition \ref{proposition1} which will be used later to justify that a data matrix $W$ whose columns represent data points can be replaced with a lower rank matrix after computing its SVD (i.e. $W=U\Sigma V^t$). It can be paraphrased by saying that  for any matrices $A,B,C$, a cluster of the columns of $B$ is also a cluster of the columns of $C=AB$. A cluster of $C$ however is not necessarily a cluster  $B$, unless $A$ has full rank:

\begin{proposition}
\label{proposition1}
Let $A$ and $B$ be ${m\times n}$ and $n\times k$ matrices. Let $C=AB$. Assume $J \subset \left\{1,2,\dotsb,k\right\}$.

\begin {enumerate}
 \item If $b_i\in \text{span}\left\{b_j : j\in J\right\}$ then $c_i\in \text{span}\left\{c_j : j\in J\right\}$.
 \item If $A$ is full rank and $m\geq n$ then
 $b_i\in \text{span}\left\{b_j : j\in J\right\}  \Longleftrightarrow c_i\in \text{span}\left\{c_j : j\in J\right\}$

\end {enumerate}    
\label{proposition1}
\end{proposition}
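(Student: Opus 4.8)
The plan is to build everything on the single observation that, since $C=AB$, the $i$-th column of $C$ is obtained by applying $A$ to the $i$-th column of $B$; that is, $c_i = Ab_i$ for every $i$. Once this is in place, both parts reduce to exploiting the linearity of $A$, together with (for the converse) the injectivity that the rank hypothesis provides.

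For part (1) I would start from the hypothesis $b_i \in \text{span}\{b_j : j \in J\}$, write $b_i = \sum_{j \in J}\alpha_j b_j$ for suitable scalars $\alpha_j$, and apply $A$ to both sides. By linearity,
\[
c_i = Ab_i = A\Big(\sum_{j\in J}\alpha_j b_j\Big) = \sum_{j\in J}\alpha_j\, Ab_j = \sum_{j\in J}\alpha_j c_j,
\]
so $c_i \in \text{span}\{c_j : j \in J\}$. Note that no hypothesis on $A$ is used here, which is exactly why this direction holds in full generality.

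For part (2), the implication $b_i \in \text{span}\{b_j : j\in J\} \Rightarrow c_i \in \text{span}\{c_j : j\in J\}$ is precisely part (1), so only the reverse implication needs the rank assumption. The key step is to read ``$A$ full rank with $m\geq n$'' as the statement that $A$ has rank $n$ (full column rank), hence trivial null space, i.e. $A$ is injective as a map $\mathbb{R}^n\to\mathbb{R}^m$. Granting this, I would begin from $c_i = \sum_{j\in J}\alpha_j c_j$, substitute $c_i=Ab_i$ and $c_j=Ab_j$, and collect terms to obtain
\[
A\Big(b_i - \sum_{j\in J}\alpha_j b_j\Big) = 0.
\]
Injectivity of $A$ then forces $b_i - \sum_{j\in J}\alpha_j b_j = 0$, which gives $b_i \in \text{span}\{b_j : j\in J\}$.

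The only genuine content is the elementary fact invoked in part (2): for an $m\times n$ matrix with $m\geq n$, full rank means rank $n$, equivalently linear independence of the columns, equivalently $\ker A=\{0\}$. I expect this to be the sole obstacle, and it is a mild one; once injectivity is secured the remainder is pure bookkeeping. It is worth emphasizing that the assumption $m\geq n$ is essential precisely here, since it is what turns full rank into injectivity (rather than surjectivity) of $A$; without it the reverse implication can fail, which is consistent with the paper's remark that a cluster of $C$ need not be a cluster of $B$ unless $A$ has full (column) rank.
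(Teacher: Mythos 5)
Your proof is correct; part (1) coincides with the paper's argument (both are just linearity of $A$ applied column-wise via $c_i = Ab_i$), but for part (2) you take a genuinely different route. You read ``full rank with $m\geq n$'' as full column rank, hence $\ker A = \{0\}$, and argue directly: from $c_i = \sum_{j\in J}\alpha_j c_j$ you obtain $A\bigl(b_i - \sum_{j\in J}\alpha_j b_j\bigr) = 0$, and injectivity forces the difference to vanish. The paper instead constructs an explicit left inverse: since $A$ has full column rank, $A^tA$ is invertible and $(A^tA)^{-1}A^tC = B$, so $B$ is itself of the form (matrix)\,$\times\, C$, and the converse implication follows by applying part (1) again with the roles of $B$ and $C$ exchanged. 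The two proofs hinge on the same underlying fact --- injectivity of $A$ is equivalent to the existence of a left inverse --- but package it differently. Your kernel argument is more elementary: it requires no inner-product structure, so it works verbatim over any field, and it avoids having to justify that $A^tA$ is invertible. The paper's argument buys a structural observation: whenever $A$ admits any left inverse (in particular when $A$ is square and invertible, as the paper remarks), the column-dependency relation between $B$ and $C$ becomes symmetric, so the equivalence is literally part (1) applied twice rather than a separate computation.
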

\begin{proof} The first part can be proved by the simple matrix manipulation
\begin{align}
AB &= A\left[ \begin{matrix} b_1  & \dotsb & b_i & \dotsb& b_k \end{matrix} \right] \nonumber\\
&=\left[ \begin{matrix} Ab_1  & \dotsb & Ab_i & \dotsb& Ab_k \end{matrix} \right] \nonumber \\
&=\left[ \begin{matrix} Ab_1  & \dotsb & A\sum_{j\in J}k_jb_j & \dotsb& Ab_k \end{matrix} \right] \nonumber \\
&=\left[ \begin{matrix} Ab_1  & \dotsb & \sum_{j\in J}k_jAb_j & \dotsb& Ab_k \end{matrix} \right] \nonumber \\
&=\left[ \begin{matrix} c_1  & \dotsb & \sum_{j\in J}k_jc_j & \dotsb& c_k \end{matrix} \right]
\end{align}

For the second part, we  note that $A^tA$ is invertible and $(A^tA)^{-1}A^tC=B$. We then apply part 1 of the proposition. Note that the same result clearly holds if $A$ is invertible.
\end{proof}
The proposition above suggest that--for the purpose of column clustering--we can replace a matrix $C$ by matrix $B$ as long as $A$ has the stated properties. Thus by choosing $A$ appropriately the matrix $C$ can be replaced by a more suitable matrix $B$, e.g. $B$ has fewer rows, is better conditioned or is in a format where columns can be easily clustered.

%

\section{Nearness to Local Subspace Approach}
\label {S3}
In this section, we develop a specialized algorithm for subspace segmentation and data clustering when the dimensions of the subspaces are equal and known. First, a local subspace is estimated for each data point. Then, the distances between the local subpaces and points are computed and a distance matrix is generated.  This is followed by construction of a binary similarity matrix  by applying a data-driven threshold to the distance matrix. Finally, the segmentation problem is converted to a one-dimensional data clustering problem. The precise steps are described in Algorithm \ref {algo:greatcircle} and in the explanation that follows.

\subsection{Algorithm for Subspace Segmentation for Subspaces of Equal and Known Dimensions}
The algorithm for subspace segmentation  is given in Algorithm \ref{algo:greatcircle}. We assume that the subspaces have dimension $d$ (for motion segmentation, $d=4$). The details of the various steps are:
\begin{algorithm}
\caption{Subspace Segmentation}
\label{algo:greatcircle}
\begin{algorithmic}[1]
\REQUIRE The $m\times N$ data matrix $W$ whose columns are drawn from subspaces of dimension $d$
\ENSURE Clustering of the feature points.
\STATE Compute the SVD of $W$ as in Equation \eqref{eq:svd}.
	\STATE Estimate the rank of $W$ (denoted by $r$) if it is not known. For example, using Equation \eqref{eq:rankestimation} or any other appropriate choice.
	\STATE Compute $(V_r)^t$ consisting of the first $r$ rows of $V^t$.
	\STATE Normalize the columns of $(V_r)^t$.
	\STATE Replace the data matrix $W$ with $(V_r)^t$.
	\STATE Find the angle between the column vectors of $W$ and represent it as a matrix. \COMMENT{i.e., $\arccos(W^t W)$.}
	\STATE Sort the angles and find the closest neighbors of column vector.
	\FORALL{Column vector $x_i$ of $W$}
			\STATE Find the local subspace for the set consisting of $x_i$ and $k$ neighbors (see Equation \eqref{eq:hypercircle}). \COMMENT{Theoretically, $k$ is  at least $d-1$. We can use the least square approximation for the subspace (see the section \textit{Local Subspace Estimation}). Let $A_i$ denote the matrix whose columns form an orthonormal bases for the local subspace associated with $x_i$.}
	\ENDFOR
	\FOR{$i=1$ to N} 
				\FOR{$j=1$ to N}
						\STATE define $H = (d_{ij}) =\left(||x_j-A_{i}^t x_j||_p+||x_i-A_{j}^t x_i||_p\right)/2$
				\ENDFOR  
					\ENDFOR
	\COMMENT {Build the distance matrix}
	\STATE Sort the entries of the $N\times N$ matrix $H$ from smallest to highest values into the vector $h$ and set the threshold $\eta$ to the value of the $T^{th}$ entry of the sorted and normalized vector $h$, where $T$ is such that $\|\chi_{[T,N^2]}-h\|_2$ is minimized, and where $\chi_{[T,N^2]}$ is the characteristic function of the discrete set $[T,N^2]$.
	\STATE Construct a similarity matrix $S$ by setting all entries of $H$ less than threshold $\eta$  to 1 and by setting all other entries  to 0. \COMMENT {Build the binary similarity matrix}
	\STATE Normalize the rows of $S$ using $l_1$-norm.
	\STATE Perform SVD $S^t = U_n \Sigma_n (V_n)^t$.
	\STATE Cluster the columns of $\Sigma_n$$(V_n)^t$ using k-means. $\Sigma_n(V_n)^t$ is the projection on to the span of $U_n$.
\end{algorithmic}
\end{algorithm}

	\emph{Dimensionality Reduction and Normalization:}
Let $W$  be an $m\times N$ data matrix  whose columns are drawn from a union of subspaces of dimensions at most $d$, possibly perturbed by noise. In order to reduce the dimensionality of the problem, we compute the SVD of $W$ 
\begin {equation}
\label{eq:svd}
W=U\Sigma V^t
\end {equation}
where $U=\left[ \begin{matrix} u_1 & u_2 & \dotsb & u_m \end{matrix} \right] $ is an $m\times m$ matrix, $V=\left[ \begin{matrix} v_1&v_2 & \dotsb & v_N \end{matrix} \right]$ is an $N\times N$ matrix, and $\Sigma$ is an $m\times {\color{red} N}$ diagonal matrix with diagonal entries $\sigma_1,\dots, \sigma_l$, where $l=\min \{m,N\}$. 

To estimate the effective rank of $W$,  one can use the modal selection algorithm \cite{Yan06} to estimate the rank $r$ if it is not known:
\begin{equation}
r=\text{argmin}_r\frac{\sigma_{r+1}^2}{\sum_{i=1}^r\sigma_i^2}+\kappa r
\label{eq:rankestimation}
\end{equation}
where $\sigma_j$ is the $j^{th}$ singular value and $\kappa$ is a suitable constant.  Another possible model selection algorithm can be found in \cite{Zappella11}. $U_r\Sigma_r(V_r)^t$ is the best rank-$r$ approximation of $W=U\Sigma V^t$, where $U_r$ refers to a matrix that has the first $r$ columns of $U$ as its columns and $V_r$ refers to the first $r$ rows of $V^t$. In the case of motion segmentation, if there are $k$ independent motions across the frames captured by a moving camera, the rank of $W$ is between $2(k+1)$ and $4(k+1)$. 
 
We can now replace  the data matrix $W$ with the matrix $(V_r)^t$ that consists of the first $r$ rows of $V^t$ (thereby reducing the dimensionality of data). This step is justified by Proposition \ref{proposition1}.  Also, \cite{Vidal05} discusses the segmentation preserving projections and states that the number of subspaces and their dimensions are preserved by random projections, except for a zero measure set of projections. It should also be noted that this step reduces additive noise as well, especially in the case of light-tailed noise, e.g., Gaussian noise. The number of subspaces corresponds to the number of moving objects. Vidal \textit{et al.} \cite{Vidal08} uses an alternative method (power method) for SVD to project incomplete motion data (trajectories) into a 5-dimensional subspace and then applies GPCA and spectral clustering for subspace segmentation.  Dimensionality reduction corresponds to Steps 1, 2, and 3 in Algorithm \ref{algo:greatcircle}.

Another type of data reduction is normalization. Specifically,  the columns of $(V_r)^t$ are normalized to lie on the unit sphere  $\mathbb{S}^{r-1}$. This is because by projecting the subspace on the unit sphere, we effectively reduce the dimensionality of the data by one. Moreover, the normalization gives equal contribution of the data matrix columns  to the description of the subspaces.  Note that the normalization can be done by using $l_p$ norms of the columns of $(V_r)^t$. This normalization procedure  corresponds to Steps 4 and 5 in Algorithm \ref{algo:greatcircle}. \\ \\
\emph{\textbf{Local Subspace Estimation:}}
The data points (i.e., each column vector of $(V_r)^t$) that are close to each other are likely to belong to the same subspace. For this reason, we estimate a local subspace for each data point using its closest neighbors. This can be done in different ways. For example, if the $l_2$-norm is used for normalization, we can find the angles between the points, i.e., we can compute the matrix $\arccos(V_r\times (V_r)^t)$. Then  we can sort the angles and find the closest neighbors of each point. If we use $l_p$-norm for normalization, we can generate a distance matrix $(a_{ij})=(||x_i-x_j||_p)$ and then sort each column of the distance matrix to find the neighbors of each $x_i$, which is the $i^{th}$ column of $(V_r)^t$.

Once the distance matrix between the points is generated,   we can find,  for each point $x_i$, a set of $k+1\ge d$ points $\left\{x_i, x_{i_1},...,x_{i_k}\right\}$ consisting of $x_i$ and its $k$ closest neighbors. Then we generate a d-dimensional subspace that is nearest (in the least square sense) to the data $\left\{x_i, x_{i_1},...,x_{i_k}\right\}$. This is accomplished by using SVD 
\begin{equation}
\label{eq:hypercircle}
X=\left[x_i \; x_{i_1} \; ... \; x_{i_k}\right] = A\Sigma B^t.
\end{equation}
Let $A_i$ denote the matrix of the first $d$ columns of $A$ associated with $x_i$. Then, the  column space $\textsl{C}(A_i)$  is the $d$-dimensional subspace nearest to $\left\{x_i, x_{i_1},...,x_{i_k}\right\}$. 
Local subspace estimation corresponds to Steps 6 to 10 in Algorithm \ref{algo:greatcircle}. \\ \\
\emph{\textbf{Construction of Binary Similarity Matrix:}}
So far,  we have associated a local subspace $S_i$ to each point $x_i$.   Ideally, the points and only those points that belong to the same subspace as $x_i$ should have zero distance from $S_i$. This suggests computing the distance of each point $x_j$ to the local subspace $S_i$ and forming a distance matrix $H$.

The distance matrix $H$ is generated as $H = (d_{ij}) = \left(||x_j-A_{i}^t x_j||_p+||x_i-A_{j}^t x_i||_p\right)/2$.\\ 
A convenient choice of $p$ is 2. Note that as $d_{ij}$ decreases, the probability of having $x_j$ on the same subspace as $x_i$  increases. Moreover, for $p=2$, $||x_j-A_{i}^t x_j||_2$ is the Euclidean distance of $x_j$ to the subspace associated with $x_i$.

Since we are not in the ideal case, a point $x_j$ that belongs to the same subspace as $x_i$ may have non-zero distance to $S_i$. However, this distance is likely to be small compared to the distance between $x_j$ and $S_k$ if $x_j$ and $x_k$ do not belong to the same subspace. This suggests that we compute a threshold that will distinguish between these two cases and transform the distance matrix into a binary matrix in which a zero  in the $(i,j)$ entry means $x_i$ and $x_j$ are likely to  belong to the same subspace, whereas $(i,j)$ entry of one means   $x_i$ and $x_j$ are not  likely to  belong to the same subspace.

To do this, we convert the distance matrix $H=(d_{ij})_{N\times N}$ into a binary similarity matrix $S=(s_{ij})$. This is done by applying a data-driven thresholding as follows:
\begin{enumerate}
	\item Create a vector $h$ that contains  the sorted entries of $H_{N\times N}$ from smallest to highest values. Scale $h$ so that its smallest value is zero and its largest value is one.
	\item Set the threshold $\eta$ to the value of the $T^{th}$ entry of the sorted vector $h$, where $T$ is such that $\|\chi_{[T,N^2]}-h\|_2$ is minimized, and where $\chi_{[T,N^2]}$ is the characteristic function of the discrete set $[T,N^2]$. 
 If the number of points in each subspace are approximately equal, then we would expect  about $\frac{N}{n}$ points in each subspace, and we would expect $\frac{N^2}{n^2}$ small entries (zero entries ideally). However, this may not be the case in general. For this reason, we compute the data-driven threshold $\eta$ that distinguishes the small entries from the large entries.
 		\item Create a similarity matrix $S$ from $H$ such that all entries of $H$ less than the threshold $\eta$ are set to 1 and  the others are set to 0.
\end{enumerate}
The construction of binary similary corresponds to Steps 11 to 17 in Algorithm \ref{algo:greatcircle}. In \cite{Yan06}, Yan and Pollofeys uses chordal distance (as defined in~\cite{Wong67}) between the subspaces $\F(x_i)$ and $\G(x_j)$  as a measure of the distance between points $x_i$ and $x_j$ 
\begin{equation}
\label{distance_pollefey}
d_{c}^2(\F,\G) = \sum_{i=1}^p \sin^2(\theta_i)
\end{equation}  
where $\{\theta_i\}_{i=1}^{p}$ are the principle angles between $p$-dimensional local subspaces $\F$ and $\G$ with $\theta_1\leq\dots\leq \theta_p$.  In this approach, the distance between any pairs of points from $\F$ and $\G$ is the same. We find distances between points and local subspaces and our approach  distinguishes different points from the same subspace. To see this, let $v \in span\{Q_{\F}\}$, $||v||_2 = 1$, where the columns of $Q_{\F}$ form an orthonormal basis for $\F$. Thus $v=Q_{\F}x$ for some $x$ with $||x||_2=1$. Let $Q_{\G}$ form an orthonormal basis for $\G$, then the Euclidian distance from $v$ to $\G$ squared is given by
\begin{align}
\|v-P_{\G}(v)||^2_2 &= \|Q_{\F}x-Q_{\G}Q_{\G}^tQ_{\F}x\|^2_2 \nonumber \\
&= ||x||_2^2-x^tQ_{\F}^tQ_{\G}Q_{\G}^tQ_{\F}x \nonumber \\
&=||x||_2^2-x^tY\Sigma Z^tZ\Sigma^t Y^tx \nonumber \\
&=x^tYY^tx-x^tY\Sigma \Sigma^t Y^tx \nonumber \\
&=x^tYY^tx-x^tY\Sigma^2Y^tx \nonumber \\ 
&=z\left( I-\Sigma^2\right)z \nonumber 
\end{align}
where $Y\Sigma Z^t$ is the SVD for $Q_{\F}^tQ_{\G}$ and $z:=Y^tx$. Thus, using the relation $\cos\theta_i= \sigma_i $ between principle angles and singular values \cite{Golub96}, we get
\begin{align}
d^2(v,\G) &= \sum_{i=1}^{p}z_i^2\sin^2(\theta_i).
\label{distance_us}
\end{align}
Hence, our approach discriminates distances from points in $\F$ to subspace $\G$. We also  have $\sum_{i=1}^{p}z_i^2\sin^2(\theta_i) \leq  \sum_{i=1}^{p}\sin^2(\theta_i)$ and therefore $d_c$  is more sensitive to  noise.

Using Eq.~\ref{distance_us}, we get $0< \sin\theta_1 \leq d \leq \sin\theta_p$.  Assuming a uniform distribution of samples from $\F$ and $\G$, $h$ can be approximated by a function depicted in Figure \ref{fig:threshold}.  The goal is to find the threshold at the jump discontinuity $T$ from $0$ to $\sin\theta_1$.  Our method minimizes the highlighted area. Under this model, a simple computation shows that our data driven thresholding algorithm picks $T_d=T$ for $\sin\theta_1 
/\sin\theta_p\geq 1/2$, e.g., if $\theta_1 \geq 30^o$. In other situations,  our algorithm overshoots in estimating the threshold index depending on $\theta_1$ and $\theta_{p}$.
\begin{figure}[h]
	\centering
		\includegraphics[scale=0.30]{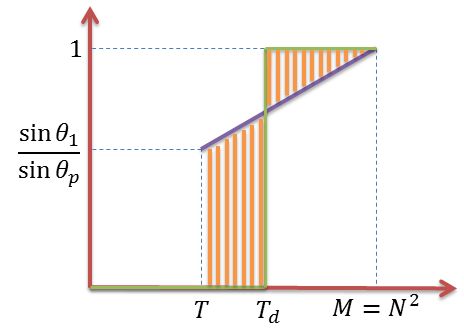}
	\label{fig:threshold}
\caption{Linear modeling for $h$}
\end{figure} \\
\emph{\textbf{Segmentation:}}
The last step is to use the similarity matrix $S$ to segment the data. To do this, we first normalize the rows of $S$ using $l_1$-norm, i.e., $\tilde S = D^{-1}S$, where $D$ is a diagonal matrix $(d_{ij}) = \sum_{j=1}^N s_{ij}$. Note that  $S$ and $\tilde S$ are not symmetric. $\tilde S$ is related to the random walk Laplacian $L_r$ ($\tilde S = I-L_r$) \cite{Petrik07}.  Although other $l_p$ normalizations are possible for $p\geq 1$, however, because of the geometry of the $l_1$ ball, $l_1$-normalization brings outliers closer to the cluster clouds (distances of outliers decrease monotonically as  $p$ decreases to 1).  Since SVD  (which will be used next) is associated with $l_2$ minimization it is sensitive to outliers. Therefore $l_1$ normalization works best when SVD is used. 

Observe that the initial data segmentation problem has now been converted to segmentation of $n$ 1-dimensional subspaces from the rows of $\tilde S$. This is because,
 in the ideal case, from the construction of $\tilde S$,   if $x_i$ and $x_j$ are  in the same subspace, the $i^{th}$ and $j^{th}$ rows of $\tilde S$  are equal. Since there are $n$ subspaces, then there will be $n$ 1-dimensional subspaces.

Now, the problem is again a subspace segmentation problem, but this time the data matrix is $\tilde S$ with each row as a data point. Also, each subspace is 1-dimensional and there are $n$ subspaces. Therefore, we can apply SVD again to obtain
\begin{equation}
\tilde{S}^t = U_n \Sigma_n (V_n)^t. \nonumber
\end{equation}
Using Proposition \ref{proposition1}, it can be shown that $\Sigma_n$$(V_n)^t$ can replace $\tilde{S}^t$ and we cluster the columns of $\Sigma_n$$(V_n)^t$,  which is the projection of $\tilde S$ on to the span of $U_n$. Since the problem is only segmentation of subspaces of dimension 1, we can use any traditional segmentation algorithm such as k-means to cluster the data points. The segmentation corresponds to Steps 18 to 20 in Algorithm \ref{algo:greatcircle}.

\section{Experimental Results}
\label{experimentalresults}
\subsection{The Hopkins 155 Dataset}
The Hopkins 155 Dataset \cite{Vidal07} was created as a benchmark database to evaluate motion segmentation algorithms. It contains two (2) and three (3) motion sequences. There are three (3) groups of video sequences in the dataset: (1) 38 sequences of outdoor traffic scenes captured by a moving camera, (2) 104 indoor checker board sequences captured by a handheld camera, and (3) 13 sequences of articulated motions such as head and face motions. Cornerness features that are extracted and tracked across the frames are  provided along with the dataset.  The ground truth segmentations are also provided for comparison.
%

\subsection{Results}
Tables \ref{tab:twomotion}, \ref{tab:threemotion}, and \ref{tab:overall} display some of the experimental results for the Hopkins 155 Dataset. Our Nearness to Local Subspace (NLS)  approach have been compared with six  (6) motion detection algorithms: (1) GPCA \cite{Vidal05}, (2) RANSAC \cite{Bolles81}, (3) Local Subspace Affinity (LSA) \cite{Yan06}, (4) MLS \cite{Kanatani03,Gruber04},  (5) Agglomerative Lossy Compression (ALC) \cite{Rao101},  and (6) Sparse Subspace Clustering (SSC) \cite{Elhamifar09}. An evaluation of those algorithms is presented in \cite{Elhamifar09} with a minor error in the tabulated results for articulated three motion analysis of SSC-N.  SSC-B and SSC-N correspond to Bernoulli and Normal random projections, respectively \cite{Elhamifar09}. The minor error in \cite{Elhamifar09} is the listing  of error as 1.42\% for articulated three motions. It is replaced with 1.60\% in Table~\ref{tab:threemotion}.  In Tables~\ref{tab:twomotion}-\ref{tab:overall}, we used the number of neighbors $k=3$. Since each point is drawn from a 4-dimensional subspace, a minimum of 3 neighbors are needed to fit a local subspace for each point. Using the same assumption as the algorithms that we compare with,  we take the rank of the data matrix to be 8 for two motion and 12 for three motion.  Table~\ref{tab:twomotion} displays the misclassification rates for the two motions video sequences. NLS outperforms all of the algorithms for the checkerboard sequences, which are linearly independent motions. The overall misclassification rate is 0.57\%. This is 24\% better than the next best algorithm. Table~\ref{tab:threemotion} shows the misclassification rates for  the three motion sequences. NLS has 1.31\% misclassification rate and performs 47\% better than the next best algorithm (i.e. SSC-N). Table~\ref{tab:overall} presents the misclassification rates for all of the video sequences. Our algorithm NLS (with 0.76\% misclassification rate) performs 39\% better than the next best algorithm (i.e. SSC-N). In general, our algorithms outperforms SSC-N, which is given as the best algorithm for the two and three motion sequences together.

Table~\ref{tab:Various_T} shows the performance of the data driven threshold index $T_d$ compared to various other possible thresholds. We  provide the results for $\pm 20\%$, $\pm 10\%$, and $\pm 5\%$ deviations from $T_d$.

Table~\ref{tab:various_k} displays the robustness of the algorithm with respect to the number of neighbors $k$. The second portion of the table excludes one pathological sequence from two-motion checker sequence for $k=4$ and $k=5$. When $k$ is set to 3 - which is the minimum number of neighbors required - the algorithm performs better. 

Table~\ref{tab:LSA} displays the increase in the performance of the original LSA algorithm when our distance/similarity and segmentation techniques are applied separately. Both of them improves the performance of the algorithm, however, the new distance and similarity combination contributes more than the new segmentation technique. 

Recently, the  Low-Rank Representation (LRR) in \cite{Liu10, Liu10_2} was applied to the Hopkins 155 Datasets and it generated an error rate of 3.16\%. The authors state that this error rate can be reduced to 0.87\% by using a variation of LRR with some additional  adjustment of a certain parameter.
\begin{table*}
\centering
\scriptsize
		\begin{tabular}{||c||cccccccc||}
					\hline
				\textbf{\em Checker (78)} &GPCA&LSA&RANSAC&MSL&ALC&SSC-B&SSC-N&NLS\\
					\hline \hline
				Average & 6.09\%    & 2.57\%& 6.52\% &4.46\% & 1.55\%   & 0.83\%  & 1.12\% & 0.23\%\\
				Median & 1.03\%    & 0.27\%& 1.75\% &0.00\% & 0.29\%   & 0.00\%  & 0.00\% & 0.00\% \\
					\hline \hline
				\textbf{\em Traffic (31)}   &GPCA&LSA&RANSAC&MSL&ALC&SSC-B&SSC-N&NLS\\
					\hline \hline
				Average & 1.41\%     & 5.43\%& 2.55\% &2.23\% & 1.59\%   & 0.23\%  & 0.02\% & 1.40\%\\
				Median & 0.00\%    & 1.48\%& 0.21\% &0.00\% & 1.17\%   & 0.00\%  & 0.00\% & 0.00\%\\
					\hline
						\hline \hline
				\textbf{\em Articulated (11)} &GPCA&LSA&RANSAC&MSL&ALC&SSC-B&SSC-N&NLS\\
					\hline \hline
				Average & 2.88\%    & 4.10\%& 7.25\% &7.23\% & 10.70\%   & 1.63\%  & 0.62\% & 1.77\% \\
				Median & 0.00\%   & 1.22\%& 2.64\% &0.00\% & 0.95\%   & 0.00\%  & 0.00\%&  0.88\%\\
				\hline
					\hline \hline
				\textbf{\em All (120 seq)} &GPCA&LSA&RANSAC&MSL&ALC&SSC-B&SSC-N&NLS\\
					\hline \hline
				Average & 4.59\%     & 3.45\%& 5.56\% &4.14\% & 2.40\%   & 0.75\%  & 0.82\%&  \textbf{0.57\%} \\
				Median & 0.38\%   & 0.59\%& 1.18\% &0.00\% & 0.43\%   & 0.00\%  & 0.00\%&  0.00\%\\
				\hline
		\end{tabular}
		\caption{\% segmentation errors for sequences with two motions.}
		\label{tab:twomotion}
\end{table*}
\begin{table*}
\centering
\scriptsize
		\begin{tabular}{||c||cccccccc||}
					\hline
				\textbf{\em Checker (26)} &GPCA&LSA&RANSAC&MSL&ALC&SSC-B&SSC-N&NLS\\
					\hline \hline
				Average & 31.95\%    & 5.80\%& 25.78\% &10.38\% & 5.20\%   & 4.49\%  & 2.97\%&  0.87\% \\
				Median & 32.93\%    & 1.77\%& 26.00\% &4.61\% & 0.67\%   & 0.54\%  & 0.27\%&  0.35\%\\
					\hline \hline
				\textbf{\em Traffic (7)}   &GPCA&LSA&RANSAC&MSL&ALC&SSC-B&SSC-N&NLS\\
					\hline \hline
				Average & 19.83\%  & 25.07\%& 12.83\% &1.80\% & 7.75\%   & 0.61\%  & 0.58\%&  1.86\% \\
				Median & 19.55\%    & 23.79\%& 11.45\% &0.00\% & 0.49\%   & 0.00\%  & 0.00\%& 1.53\% \\
					\hline
						\hline \hline
				\textbf{\em Articulated (2)} &GPCA&LSA&RANSAC&MSL&ALC&SSC-B&SSC-N&NLS\\
					\hline \hline
				Average & 16.85\%    & 7.25\%& 21.38\% &2.71\% & 21.08\%   & 1.60\%  & 1.60\%&  5.12\% \\
				Median & 16.85\%   & 7.25\%& 21.38\% &2.71\% & 21.08\%   & 1.60\%  & 1.60\%&  5.12\% \\
				\hline
					\hline \hline
				\textbf{\em All (35 seq)} &GPCA&LSA&RANSAC&MSL&ALC&SSC-B&SSC-N&NLS\\
					\hline \hline
				Average & 28.66\%     & 9.73\%& 22.94\% &8.23\% & 6.69\%   & 3.55\%  & 2.45\%&  \textbf{1.31\%} \\
				Median & 28.26\%   & 2.33\%& 22.03\% &1.76\% & 0.67\%   & 0.25\%  & 0.20\%&  0.45\% \\
				\hline
		\end{tabular}
		\caption{\% segmentation errors for sequences with three motions.}
		\label{tab:threemotion}
\end{table*}
\begin{table*}
\centering
\scriptsize
		\begin{tabular}{||c||cccccccc||}
					\hline
				\textbf{\em All (155 seq)} &GPCA&LSA&RANSAC&MSL&ALC&SSC-B&SSC-N&NLS\\
					\hline \hline
				Average & 10.34\%     & 4.94\%& 9.76\% &5.03\% & 3.56\%   & 1.45\%  & 1.24\%&  \textbf{0.76\%} \\
				Median & 2.54\%   & 0.90\%& 3.21\% &0.00\% & 0.50\%   & 0.00\%  & 0.00\%&  0.20\%\\
				\hline
		\end{tabular}
		\caption{\% segmentation errors for all sequences.}
		\label{tab:overall}
\end{table*}
\begin{table*}
\centering
\scriptsize
		\begin{tabular}{||c||ccccccc||}
					\hline
				\textbf{\em All-2 (120 seq)}  &Data Driven $T_d$&0.8$T_{d}$&0.9$T_{d}$&0.95$T_{d}$&1.05$T_{d}$&1.10$T_{d}$&1.20$T_{d}$\\
					\hline \hline
				Average  & 0.57\%     & 0.95\% &  1.17\%&  0.62\%& 0.58\%&  1.05\%& 0.77\% \\
				Median & 0.00\%    & 0.00\% &  0.35\%&  2.27\% & 2.27\%&  0.00\%& 0.00\% \\
				\hline
					\hline 
				\textbf{\em All-3 (35 seq)} &Data Driven $T_d$&0.8$T_{d}$&0.9$T_{d}$&0.95$T_{d}$&1.05$T_{d}$&1.10$T_{d}$&1.20$T_{d}$\\
					\hline \hline
				Average & 1.31\%     & 4.39\% &  3.18\% &  1.42\%& 1.20\%&  1.24\%& 2.06\%  \\
				Median& 0.45\%    &  0.60\% &  0.57\% &  0.46\%& 0.45\% &  0.42\%& 0.37\% \\
				\hline \hline
				\textbf{\em All (155 seq)}   &Data Driven $T_d$&0.8$T_{d}$&0.9$T_{d}$&0.95$T_{d}$&1.05$T_{d}$&1.10$T_{d}$&1.20$T_{d}$\\
					\hline 
				Average  & 0.76\%    & 1.84\% &  1.67\% &  0.83\%& 0.74\% &  1.10\%& 1.11\% \\
				Median & 0.20\%     &  0.00\% &  0.00\%&  0.20\%& 0.20\% &  0.18\%& 0.19\% \\
				\hline
		\end{tabular}
		\caption{\% comparison of the data driven threshold index $T_d$ with other choices.}
		\label{tab:Various_T}
\end{table*}
\begin{table*}
\centering
\scriptsize
		\begin{tabular}{||c||cc||c||cc||}
				\multicolumn{1}{c||}{} &
				\multicolumn{2}{c||}{\textit{ALL SEQ INCLUDED}} &
				\multicolumn{1}{c||}{} &
				\multicolumn{2}{c||}{\textit{1 SEQ EXCLUDED}} \\
				\hline 
				\textbf{\em Checker-2 (78)}&k=5&k=4&k=3&k=5&k=4\\
					\hline \hline
				Average    & 0.65\%  & 1.59\% & 0.23\%& 0.23\%  & 0.97\%\\
				Median     & 0.00\%  & 0.00\% & 0.00\% & 0.00\%  & 0.00\%  \\
					\hline \hline
				\textbf{\em Traffic-2 (31)} &k=5&k=4&k=3&k=5&k=4\\
					\hline \hline
				Average    & 1.56\%  & 1.66\% & 1.40\%& 1.56\%  & 1.66\% \\
				Median     & 0.00\%  & 0.00\% & 0.00\%& 0.00\%  & 0.00\% \\
					\hline
						\hline 
				\textbf{\em Articulated-2 (11)} &k=5&k=4&k=3&k=5&k=4\\
					\hline \hline
				Average   & 2.44\%  & 2.33\% & 1.77\%& 2.44\%  & 2.33\%  \\
				Median    & 0.00\%  & 0.00\%&  0.88\%& 0.00\%  & 0.00\%\\
				\hline
					\hline 
				\textbf{\em All-2 (120 seq)} &k=5&k=4&k=3&k=5&k=4\\
					\hline \hline
				Average     & 1.04\%  & 1.75\%&  \textbf{0.57}\%& 0.77\%  & 1.35\% \\
				Median    & 0.00\%  & 0.00\%&  \textbf{0.00\%}& 0.00\%  & 0.00\%\\
				\hline \hline
				\textbf{\em Checker-3 (26)} &k=5&k=4&k=3&k=5&k=4\\
					\hline \hline
				Average     & 0.44\%  & 0.43\%&  0.87\% & 0.44\%  & 0.43\%\\
				Median    & 0.24\%  & 0.22\%&  0.35\%& 0.24\%  & 0.22\%\\
					\hline \hline
				\textbf{\em Traffic-3 (7)}  &k=5&k=4&k=3&k=5&k=4\\
					\hline \hline
				Average     & 6.59\%  & 7.18\%&  1.86\%& 6.59\%  & 7.18\% \\
				Median    & 1.81\%  & 4.37\%& 1.53\%& 1.81\%  & 4.37\% \\
					\hline
						\hline 
				\textbf{\em Articulated-3 (2)} &k=5&k=4&k=3&k=5&k=4\\
					\hline \hline
				Average    & 20.54\%  & 4.05\%&  5.12\%& 20.54\%  & 4.05\% \\
				Median   & 20.54\%  & 4.05\%&  5.12\%& 20.54\%  & 4.05\%\\
				\hline
					\hline 
				\textbf{\em All-3 (35 seq)}&k=5&k=4&k=3&k=5&k=4\\
					\hline \hline
				Average    & 2.82\%  & 1.98\%&  \textbf{1.31\%}& 2.82\%  & 1.98\% \\
				Median  & 0.65\%  & 0.47\%&  \textbf{0.45\%}& 0.65\%  & 0.47\%\\
				\hline \hline
				\textbf{\em All (155 seq)} &k=5&k=4&k=3&k=5&k=4\\
					\hline \hline
				Average     & 1.50\%  & 1.81\%&  \textbf{0.76}\% & 1.30\%  & 1.50\% \\
				Median    & 0.21\%  & 0.00\%&  \textbf{0.20\%}& 0.21\%  & 0.00\%\\
				\hline
		\end{tabular}
		\caption{\% segmentation errors -  NLS algorithm for various $k$.}
		\label{tab:various_k}
\end{table*}
\begin{table*}

\centering
\scriptsize
		\begin{tabular}{||c||ccc||}
					\hline
				\textbf{\em Checker-2 (78)}&LSA(Original)&LSA(New Dist/Similarity)&LSA(New Segmentation)\\
					\hline \hline
				Average & 2.57\%    & 0.97\% & 1.71\%\\
				Median  & 0.27\%    & 0.00\% & 0.00\% \\
					\hline \hline
				\textbf{\em Traffic-2 (31)}&LSA(Original)&LSA(New Dist/Similarity)&LSA(New Segmentation)\\
					\hline \hline
				Average  & 5.43\%     & 1.59\% & 4.99\%\\
				Median  & 1.48\%     & 1.11\% & 0.65\%\\
					\hline
						\hline \hline
				\textbf{\em Articulated-2 (11)} &LSA(Original)&LSA(New Dist/Similarity)&LSA(New Segmentation)\\
					\hline \hline
				Average& 4.10\% & 2.10\% & 4.26\% \\
				Median  & 1.22\%   & 0.43\%&  1.21\%\\
				\hline
					\hline \hline
				\textbf{\em All-2 (120 seq)}  &LSA(Original)&LSA(New Dist/Similarity)&LSA(New Segmentation)\\
					\hline \hline
				Average  & 3.45\%     & 1.22\%&  2.27\% \\
				Median & 0.59\%    & 0.00\%&  0.35\%\\
				\hline
				\textbf{\em Checker-3 (26)}&LSA(Original)&LSA(New Dist/Similarity)&LSA(New Segmentation)\\
					\hline \hline
				Average  & 5.80\%    & 2.66\%&  4.67\% \\
				Median & 1.77\%     & 0.30\%&  0.91\%\\
					\hline \hline
				\textbf{\em Traffic-3 (7)}  &LSA(Original)&LSA(New Dist/Similarity)&LSA(New Segmentation)\\
					\hline \hline
				Average  & 25.07\%    & 6.38\%&  24.46\% \\
				Median & 23.79\%    & 1.28\%& 31.20\% \\
					\hline
						\hline \hline
				\textbf{\em Articulated-3 (2)} &LSA(Original)&LSA(New Dist/Similarity)&LSA(New Segmentation)\\
					\hline \hline
				Average & 7.25\%   & 6.18\%&  7.25\% \\
				Median & 7.25\%    & 6.18\%&  7.25\% \\
				\hline
					\hline \hline
				\textbf{\em All-3 (35 seq)}&LSA(Original)&LSA(New Dist/Similarity)&LSA(New Segmentation)\\
					\hline \hline
				Average & 9.73\%     & 2.45\%&  8.78\% \\
				Median& 2.33\%    & 0.20\%&  1.94\% \\
				\hline
				\textbf{\em All (155 seq)} &LSA(Original)&LSA(New Dist/Similarity)&LSA(New Segmentation)\\
					\hline \hline
				Average  & 4.94\%    & 1.84\%&  3.96\% \\
				Median & 0.90\%     & 0.18\%&  0.61\%\\
				\hline
		\end{tabular}
		\caption{\% segmentation errors for LSA with various parameters.}
		\label{tab:LSA}

\end{table*}

\section{Conclusions}
 The NLS  approach described in this paper can handle noise  effectively, but it works only in  special cases of subspaces segmentation problems (i.e., subspaces of equal and known dimensions).  Our approach is based on the computation of a binary similarity matrix for the data points. A local subspace is first estimated for each data point. Then, a distance matrix is generated by computing the distances between the local subspaces and points. The distance matrix is converted to the similarity matrix by applying a data-driven threshold.  The problem is then transformed to segmentation of subspaces of dimension $1$ instead of subspaces of dimension $d$.   The algorithm was applied to the Hopkins 155 Dataset and generated the best results to  date. 

\section*{Acknowledgement}
We would like to thank Professor Ren\'{e} Vidal for his invaluable comments and feedback.
\bibliographystyle{elsarticle-num}
\bibliography{sekmen}

\end{document}